%%%%%%%%%%%%%%%%%%%%%%%%%%%%%%%%%%%%%%%%%%%%%%%%%%%%%%%%%%%%%%%%%%%%%%%%%%%%%%%%
%2345678901234567890123456789012345678901234567890123456789012345678901234567890
%        1         2         3         4         5         6         7         8

\documentclass[letterpaper, 10 pt, conference]{ieeeconf}  % Comment this line out
                                                          % if you need a4paper
%\documentclass[a4paper, 10pt, conference]{ieeeconf}      % Use this line for a4
                                                          % paper

\IEEEoverridecommandlockouts                              % This command is only
                                                          % needed if you want to
                                                          % use the \thanks command
\overrideIEEEmargins
% See the \addtolength command later in the file to balance the column lengths
% on the last page of the document

% The following packages can be found on http:\\www.ctan.org
  % assumes amsmath package installed

\title{\LARGE \bf
A Frequency-Domain Stability Method for Reset Systems
}

%\author{ \parbox{3 in}{\centering Huibert Kwakernaak*
%         \thanks{*Use the $\backslash$thanks command to put information here}\\
%         Faculty of Electrical Engineering, Mathematics and Computer Science\\
%         University of Twente\\
%         7500 AE Enschede, The Netherlands\\
%         {\tt\small h.kwakernaak@autsubmit.com}}
%         \hspace*{ 0.5 in}
%         \parbox{3 in}{ \centering Pradeep Misra**
%         \thanks{**The footnote marks may be inserted manually}\\
%        Department of Electrical Engineering \\
%         Wright State University\\
%         Dayton, OH 45435, USA\\
%         {\tt\small pmisra@cs.wright.edu}}
%}

\author{Ali Ahmadi Dastjerdi, Alessandro Astolfi, and S. Hassan HosseinNia% <-this % stops a space
\thanks{A. Astolfi is with the Department of Electrical and Electronic Engineering, Imperial College London, London, SW7 2AZ, UK and with the Dipartimento di Ingegneria Civile e Ingegneria Informatica, Universita di Roma ``Tor Vergata", Rome, 00133, Italy
        {\tt\small a.astolfi@imperial.ac.uk}}%
\thanks{S.H. HosseinNia and A. Ahmadi Dastjerdi are with the Faculty of Precision and Microsystem  Engineering, Delft University of Technology, 2826 CD Delft, The Netherlands
        {\tt\small S.H.HosseinNiaKani@tudelft.nl}
        {\tt\small A.AhmadiDastjerdi@tudelft.nl}}%
}

%%%%%%%%%%%%%%Packages
\usepackage{graphics} % for pdf, bitmapped graphics files
\usepackage{epsfig} % for postscript graphics files
\usepackage{mathptmx} % assumes new font selection scheme installed
\usepackage{times} % assumes new font selection scheme installed
\usepackage{amsmath} % assumes amsmath package installed
\usepackage{amssymb}
\usepackage{amsfonts}
\usepackage{amsmath}
\usepackage{amssymb}
\usepackage{mathtools}
\usepackage{steinmetz}
\usepackage{cite}
\newtheorem{definition}{Definition}
\newtheorem{theorem}{Theorem}
\newtheorem{remark}{Remark}

\usepackage{enumerate}
\usepackage{esvect}
\usepackage{cancel}
\usepackage{tikz} 
\usetikzlibrary{calc,patterns,arrows,shapes.arrows,intersections}
\usepackage{mathtools}
\DeclarePairedDelimiter\abs{\lvert}{\rvert}%
\DeclarePairedDelimiter\norm{\lVert}{\rVert}%
\makeatletter
\let\oldabs\abs
\def\abs{\@ifstar{\oldabs}{\oldabs*}}
\let\oldnorm\norm
\def\norm{\@ifstar{\oldnorm}{\oldnorm*}}
\makeatother
\DeclareMathAlphabet{\mathcal}{OMS}{cmsy}{m}{n}
\begin{document}

\maketitle
\thispagestyle{empty}
\pagestyle{empty}

%%%%%%%%%%%%%%%%%%%%%%%%%%%%%%%%%%%%%%%%%%%%%%%%%%%%%%%%%%%%%%%%%%%%%%%%%%%%%%%%
\begin{abstract}

Nowadays, the demand for an alternative to linear PID controllers has increased because of the rising expectations of the high-precision industry. The potential of reset controllers to solve this important challenge has been extensively demonstrated in the literature. However, similarly to other non-linear controllers, the stability analysis for these controllers is complex and relies on parametric models of the systems which may hinder the applicability of these controllers in industry. The well-known $H_\beta$ method tries to solve this significant issue. However, assessing the $H_\beta$ condition in the frequency-domain is complex, especially for high dimensional plants. In addition, it cannot be used to assess UBIBS stability of reset control systems in the case of reseting to non-zero values. These problems have been solved in this paper for the first order reset elements, and an easy-to-use frequency approach for assessing stability of reset control systems is proposed. The effectiveness of the proposed approach is demonstrated through a practical example. 

\end{abstract}

%%%%%%%%%%%%%%%%%%%%%%%%%%%%%%%%%%%%%%%%%%%%%%%%%%%%%%%%%%%%%%%%%%%%%%%%%%%%%%%%
\section{INTRODUCTION}\label{sec:1}

Technology developments in cutting edge industries have control requirements that cannot be fulfilled by linear controllers. To overcome this problem, linear controllers should be substituted with non-linear ones, for example reset controllers. These controllers have attracted significant attention due to their simple structure \cite{clegg1958nonlinear,beker2004fundamental,aangenent2010performance,forni2011reset,villaverde2011reset,banos2011reset,van2017frequency,hosseinnia2013fractional,guo2015analysis}. The advantages of reset controllers have been utilized to enhance the performance of several mechatronic systems (see, e.g. \cite{horowitz1975non,hazeleger2016second,guo2009frequency,van2018hybrid,chen2019development,valerio2019reset,saikumar2019constant}). In 1958, the first reset element was introduced by Clegg \cite{clegg1958nonlinear}. The Clegg Integrator (CI) is an integrator which resets its state to zero when its input signal crosses zero. Extensions of the CI, which provide additional design freedom and flexibility, include First Order Reset Elements (FORE) \cite{zaccarian2005first,horowitz1975non}, Generalized First Order Reset Element (GFORE) \cite{saikumar2019constant}, Second Order Reset Elements (SORE) \cite {hazeleger2016second}, and Generalized Second Order Reset Element (GSORE) \cite{saikumar2019constant}. Several reset techniques, such as those based on reset bands \cite{barreiro2014reset,banos2014tuning}, fixed reset instants, partial reset (resetting to a non-zero value or resetting a selection of the controller states) \cite{zheng2007improved}, and the PI+CI approach \cite{zheng2007improved} have also been studied to improve the performances of these controllers.  

Stability is one of the most important requirements of every control system, and reset control systems are no exception \cite{khalil2002nonlinear,beker2004fundamental,van2017frequency,guo2015analysis,banos2011reset,onevsic2008stability,banos2010reset,rifai2006compositional}. Several researchers have analyzed the stability of reset controllers using quadratic Lyapunov functions \cite{banos2011reset,guo2015analysis,polenkova2012stability,vettori2014geometric}, reset instants dependant methods \cite{banos2010reset,banos2007reset,paesa2011design}, passivity, small gain, and IQC approaches \cite{khalil2002nonlinear,griggs2007stability,carrasco2010passivity,hollot1997stability}. However, most of these approaches are complex, need parametric models of the system, require solving LMI's, and are only applicable to specific types of plants. As a result, these methods do not interface well with the current control design in industry which favours the use of frequency-domain methods. Several researchers have proposed frequency-domain approaches for assessing stability of reset controllers \cite{beker1999stability,beker2004fundamental,van2017frequency}. In \cite{beker1999stability}, an approach for determining stability of a FORE in closed-loop with a mass-spring damper system has been proposed. The result in \cite{van2017frequency} is applicable to reset control systems under the specific condition $e(t)u(t)<\dfrac{u^2}{\varepsilon},\ \varepsilon>0$, in which $e(t)$ and $u(t)$ are the input and the output of the reset controller, respectively. This method is not usable in the case of partial reset techniques. 

The $H_\beta$ condition has gained significant attention among existing approaches for assessing stability of reset systems \cite{beker2004fundamental,banos2010reset,guo2015analysis}. When the base linear system of the reset controller is a first order transfer function, it provides sufficient frequency-domain conditions for uniform bounded-input bounded-state (UBIBS) stability. However, assessing the $H_\beta$ condition in the frequency-domain is complex, especially for high dimensional plants. Moreover, it cannot be used to assess UBIBS stability of reset control systems in the case of partial reset techniques. As a result, obtaining a general easy-to-use frequency-domain method for assessing stability of reset control systems is an important open problem. 

In this paper, based on the $H_\beta$ condition, a novel frequency-domain method for reset controllers with first order base linear system is proposed. This can assess UBIBS stability of reset control systems in the frequency-domain. In this method, stability is determined on the basis of the frequency response of the base linear open-loop transfer function, and the $H_\beta$ condition does not have to be explicitly tested. Besides, this method is applicable to partial reset techniques. 

The remainder of the paper is organized as follows. In Section \ref{sec:2} the problem is formulated. In Section \ref{sec:3} the frequency-domain approach for determining stability of reset control systems is presented. In Section \ref{sec:4} the effectiveness of this approach is demonstrated via a practical example. Finally, some remarks and suggestions for future studies are presented in Section \ref{sec:5}.
%%%%%%%%%%%%%%%%%%%%%%%%%%%%%%%Problem formulation
 \section{Problem formulation}\label{sec:2}
 In this section the well-known reset structures GFORE and  Proportional Clegg Integrator (PCI) are recalled. Then, the problem under investigation is posed. The focus of the paper is on the single-input-single-output (SISO) control architecture illustrated in Fig. \ref{F-21}. The closed-loop system consists of a linear plant with transfer function $G(s)$, a linear controller with transfer function $C_L(s)$, and a reset controller with base linear transfer function $C_R(s)$. 
  %%%%FIGURE ARTICHECTURE
 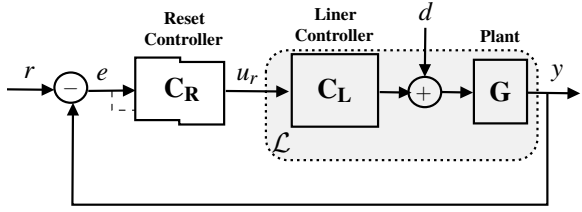
\begin{figure}[!t]
 	\centering
 	\resizebox{0.9\columnwidth}{!}{%
 		\tikzset{every picture/.style={line width=0.75pt}} %set default line width to 0.75pt        
 		\begin{tikzpicture}[x=0.75pt,y=0.75pt,yscale=-1,xscale=1]
 		%uncomment if require: \path (0,203); %set diagram left start at 0, and has height of 203
 		%Rounded Rect [id:dp5779728376664172] 
 		\draw  [fill={rgb, 255:red, 241; green, 241; blue, 241 }  ,fill opacity=1 ][dash pattern={on 1.69pt off 2.76pt}][line width=1.5]  (238,72.45) .. controls (238,61.4) and (246.95,52.45) .. (258,52.45) -- (465,52.45) .. controls (476.05,52.45) and (485,61.4) .. (485,72.45) -- (485,132.45) .. controls (485,143.5) and (476.05,152.45) .. (465,152.45) -- (258,152.45) .. controls (246.95,152.45) and (238,143.5) .. (238,132.45) -- cycle ;
 		%Shape: Rectangle [id:dp8872257254473093] 
 		\draw  [line width=1.5]  (261.5,56) -- (340,56) -- (340,122) -- (261.5,122) -- cycle ;
 		%Shape: Ellipse [id:dp19980167715656805] 
 		\draw  [line width=1.5]  (45.63,86.45) .. controls (45.63,78.84) and (52.3,72.68) .. (60.53,72.68) .. controls (68.76,72.68) and (75.43,78.84) .. (75.43,86.45) .. controls (75.43,94.06) and (68.76,100.22) .. (60.53,100.22) .. controls (52.3,100.22) and (45.63,94.06) .. (45.63,86.45) -- cycle ;
 		%Straight Lines [id:da0385434449901394] 
 		\draw [line width=1.5]    (494,92) -- (494,193) -- (66.01,193.01) -- (62,193) -- (62.51,104.22) ;
 		\draw [shift={(62.53,100.22)}, rotate = 450.33] [fill={rgb, 255:red, 0; green, 0; blue, 0 }  ][line width=0.08]  [draw opacity=0] (11.61,-5.58) -- (0,0) -- (11.61,5.58) -- cycle    ;
 		%Straight Lines [id:da05885102875442705] 
 		\draw [line width=1.5]    (2.5,88) -- (41.63,88.41) ;
 		\draw [shift={(45.63,88.45)}, rotate = 180.6] [fill={rgb, 255:red, 0; green, 0; blue, 0 }  ][line width=0.08]  [draw opacity=0] (11.61,-5.58) -- (0,0) -- (11.61,5.58) -- cycle    ;
 		%Straight Lines [id:da39309315496373975] 
 		\draw [line width=1.5]    (475.65,91.05) -- (519.88,91.63) ;
 		\draw [shift={(523.88,91.68)}, rotate = 180.75] [fill={rgb, 255:red, 0; green, 0; blue, 0 }  ][line width=0.08]  [draw opacity=0] (11.61,-5.58) -- (0,0) -- (11.61,5.58) -- cycle    ;
 		%Straight Lines [id:da6065506865121402] 
 		\draw [line width=1.5]    (341.49,90.5) -- (366.5,90.93) ;
 		\draw [shift={(370.5,91)}, rotate = 180.99] [fill={rgb, 255:red, 0; green, 0; blue, 0 }  ][line width=0.08]  [draw opacity=0] (11.61,-5.58) -- (0,0) -- (11.61,5.58) -- cycle    ;
 		%Shape: Rectangle [id:dp5467349831333912] 
 		\draw  [line width=1.5]  (427.17,61.09) -- (475.5,61.09) -- (475.5,119) -- (427.17,119) -- cycle ;
 		%Straight Lines [id:da663322415899998] 
 		\draw [line width=1.5]    (76.43,89.45) -- (115.5,89.04) ;
 		\draw [shift={(119.5,89)}, rotate = 539.4] [fill={rgb, 255:red, 0; green, 0; blue, 0 }  ][line width=0.08]  [draw opacity=0] (11.61,-5.58) -- (0,0) -- (11.61,5.58) -- cycle    ;
 		%Straight Lines [id:da00759412524357761] 
 		\draw [line width=1.5]    (201.49,89.5) -- (255,89.97) ;
 		\draw [shift={(259,90)}, rotate = 180.5] [fill={rgb, 255:red, 0; green, 0; blue, 0 }  ][line width=0.08]  [draw opacity=0] (11.61,-5.58) -- (0,0) -- (11.61,5.58) -- cycle    ;
 		%Shape: Ellipse [id:dp029726587066210453] 
 		\draw  [line width=1.5]  (367.63,90.45) .. controls (367.63,82.84) and (374.3,76.68) .. (382.53,76.68) .. controls (390.76,76.68) and (397.43,82.84) .. (397.43,90.45) .. controls (397.43,98.06) and (390.76,104.22) .. (382.53,104.22) .. controls (374.3,104.22) and (367.63,98.06) .. (367.63,90.45) -- cycle ;
 		%Straight Lines [id:da20690026255567973] 
 		\draw [line width=1.5]    (383,32) -- (382.57,72.68) ;
 		\draw [shift={(382.53,76.68)}, rotate = 270.6] [fill={rgb, 255:red, 0; green, 0; blue, 0 }  ][line width=0.08]  [draw opacity=0] (11.61,-5.58) -- (0,0) -- (11.61,5.58) -- cycle    ;
 		%Shape: Path Data [id:dp5923398308383396] 
 		\draw  [line width=1.5]  (160,59.09) -- (160,63.09) -- (200,63.09) -- (200,120) -- (159.17,120) -- (159.17,116) -- (119.17,116) -- (119.17,59.09) -- (160,59.09) -- cycle ;
 		%Straight Lines [id:da023018556967414172] 
 		\draw [line width=1.5]    (397.43,90.45) -- (422.44,90.88) ;
 		\draw [shift={(426.44,90.95)}, rotate = 180.99] [fill={rgb, 255:red, 0; green, 0; blue, 0 }  ][line width=0.08]  [draw opacity=0] (11.61,-5.58) -- (0,0) -- (11.61,5.58) -- cycle    ;
 		%Straight Lines [id:da43982489498145316] 
 		\draw  [dash pattern={on 4.5pt off 4.5pt}]  (97.97,89.23) -- (98,107) -- (119,107) ;
 		
 		% Text Node
 		\draw (60.53,87.61) node  [scale=1.2,font=\large]  {$-$};
 		% Text Node
 		\draw (22.84,72.76) node   [scale=1.5,font=\large]  {$r$};
 		% Text Node
 		\draw (503,73) node  [scale=1.5,font=\large]  {$y$};
 		% Text Node
 		\draw (89.35,74.76) node   [scale=1.5,font=\large]  {$e$};
 		% Text Node
 		\draw (300,29) node  [font=\large] [align=left] {{\fontfamily{ptm}\selectfont {\large \textbf{ \ \ Liner }}}\\{\fontfamily{ptm}\selectfont {\large \textbf{Controller}}}};
 		% Text Node
 		\draw (451,40) node  [font=\large] [align=left] {{\fontfamily{ptm}\selectfont \textbf{{\large Plant}}}};
 		% Text Node
 		\draw (381.53,91.61) node  [scale=1.2,font=\large]  {$+$};
 		% Text Node
 		\draw (383.84,17) node  [scale=1.5,font=\large]  {$d$};
 		% Text Node
 		\draw (162,88) node  [scale=1.6,font=\large]  {$\mathbf{C_{\mathbf{R}}}$};
 		% Text Node
 		\draw (163,33) node  [font=\large] [align=left] {{\fontfamily{ptm}\selectfont {\large \textbf{ \ \ Reset }}}\\{\fontfamily{ptm}\selectfont {\large \textbf{Controller}}}};
 		% Text Node
 		\draw (300.75,89) node  [scale=1.6,font=\large]  {$\mathbf{C_{\mathbf{L}}}$};
 		% Text Node
 		\draw (221.35,76) node  [scale=1.5,font=\large]  {$u_{r}$};
 		% Text Node
 		\draw (451.34,90.04) node  [scale=1.6,font=\large]  {$\mathbf{G}$};
 		% Text Node
 		\draw (252,136) node  [scale=1.6,font=\large]  {$\mathcal{L}$};
 		\end{tikzpicture}
 	}
 	\caption{The closed-loop architecture of a reset controller}
 	\label{F-21}
 \end{figure}
The state-space representation of the first order reset controller is
 \begin{equation}\label{E-21}
 \begin{cases} 
 \dot{x}_r(t)=A_rx_r(t)+B_re(t), & e(t)\neq0,  \\
 x_r(t^+)=\gamma x(t), & e(t)=0, \\
 u_r(t)=C_rx(t)+D_re(t),
 \end{cases}
 \end{equation}
 in which $x_r(t)\in\mathbb{R}$ is the reset state, $A_r$, $B_r$, and $C_r$ are the dynamic matrices of the reset controller, $-1<\gamma<1$ determines the value of the reset state after the reset action, $r(t)\in\mathbb{R}$ is the reference signal, $y(t)\in\mathbb{R}$ is the output of the plant, and $e(t)=r(t)-y(t)$ is the tracking error. The focus of this paper is on GFORE and PCI, which have been mostly used in practice.
 In the case of GFORE one has
 \begin{equation}\label{GFO}
 C_R(s)=\dfrac{1}{\displaystyle\frac{s}{\omega_r}+1},
 \end{equation}	
 whereas for PCI one has
 \begin{equation}\label{GFO}
 C_R(s)=1+\dfrac{\omega_r}{s}.
 \end{equation}
   Thus, for GFORE, $A_r=-C_r=-\omega_r$ ($\omega_r$ is the so-called corner frequency), $D_r=0$ and $B_r=1$, whereas for the PCI, $A_r=0$, $C_r=\omega_r$ and $B_r=D_r=1$. 
   \newline Let now $\mathcal{L}(s)=C_L(s)G(s)$ and assume that $G(s)$ is strictly proper. Let the state-space realization of $\mathcal{L}(s)$ be
 \begin{equation}\label{E-22}
 \begin{cases} 
 \dot{\zeta}(t)=A\zeta(t)+Bu_r(t)+B_dd(t),\\
 y(t)=C\zeta(t),
 \end{cases}
 \end{equation}
 where $\zeta(t)\in\mathbb{R}^{n_p}$ describes the state of the plant and of the linear controller ($n_p$ is the number of states of the whole linear part), $A$, $B$, and $C$ are the dynamic matrices, and $d(t)\in\mathbb{R}$ is an external disturbance. The closed-loop state-space representation of the overall system can, therefore, be written as
 \begin{equation}\label{E-23}
 \begin{cases} 
 \dot{x}(t)=\bar{A}x(t)+\bar{B}r(t)+\bar{B}_dd(t), & e(t)\neq 0,\\
 x(t^+)=\bar{A}_\rho x(t), & e(t)=0,  \\
 y(t)=\bar{C}x(t),
 \end{cases}
 \end{equation} 
 where $x(t)=[x_r(t)^T\quad \zeta(t)^T]^T\in\mathbb{R}^{n_p+1}$, and \newline\newline 
 $\bar{A}=\begin{bmatrix} A_r & -B_rC \\ BC_r & A-BD_rC\end{bmatrix}$,             $\quad\quad\bar{B}=\begin{bmatrix} 1 \\ D_rB \end{bmatrix}$,  $\quad\quad\bar{B_d}=\begin{bmatrix} 0 \\ B_d \end{bmatrix},$\newline\newline$\bar{A}_\rho=\begin{bmatrix}\gamma & 0 \\ 0 & I_{n_p\times n_p} \end{bmatrix}$, and $\bar{C}=\begin{bmatrix} 0 & C \end{bmatrix}$.    
 The main goal of the paper is to provide  frequency-domain sufficient conditions to assess UBIBS stability of the reset control system (\ref{E-23}) with the control structure depicted in Fig. \ref{F-21}. 
 %%%%frequency-domain Satbility Analysis
 \section{Frequency-domain stability analysis}\label{sec:3}  
 %%%%%%%%%%%%%%%FIGURE new coordination
 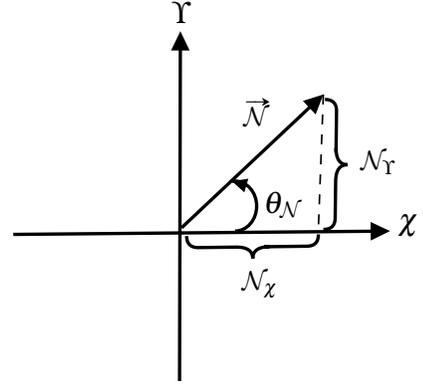
\begin{figure}[!t]
 	\centering
	\resizebox{0.65\columnwidth}{!}{%
 	\tikzset{every picture/.style={line width=0.75pt}} %set default line width to 0.75pt        
 	\begin{tikzpicture}[x=0.75pt,y=0.75pt,yscale=-1,xscale=1]
 	%uncomment if require: \path (0,246); %set diagram left start at 0, and has height of 246
 	%Straight Lines [id:da6055645258040092] 
 	\draw [line width=1.5]    (9,150) -- (231,148.04) ;
 	\draw [shift={(235,148)}, rotate = 539.49] [fill={rgb, 255:red, 0; green, 0; blue, 0 }  ][line width=0.08]  [draw opacity=0] (11.61,-5.58) -- (0,0) -- (11.61,5.58) -- cycle    ;
 	%Straight Lines [id:da49677046907762845] 
 	\draw [line width=1.5]    (108.75,238.01) -- (109.24,32.01) ;
 	\draw [shift={(109.25,28.01)}, rotate = 450.14] [fill={rgb, 255:red, 0; green, 0; blue, 0 }  ][line width=0.08]  [draw opacity=0] (11.61,-5.58) -- (0,0) -- (11.61,5.58) -- cycle    ;
 	%Straight Lines [id:da3706867081231975] 
 	\draw [line width=1.5]    (110,146) -- (192.09,68.74) ;
 	\draw [shift={(195,66)}, rotate = 496.74] [fill={rgb, 255:red, 0; green, 0; blue, 0 }  ][line width=0.08]  [draw opacity=0] (11.61,-5.58) -- (0,0) -- (11.61,5.58) -- cycle    ;
 	%Shape: Arc [id:dp3484281447845041] 
 	\draw  [draw opacity=0][line width=1.5]  (140.24,118.13) .. controls (140.81,118.04) and (141.4,117.99) .. (141.99,117.99) .. controls (149.16,117.98) and (154.99,124.69) .. (155,132.97) .. controls (155.02,141.26) and (149.22,147.99) .. (142.04,148.01) .. controls (142.03,148.01) and (142.02,148.01) .. (142,148.01) -- (142.02,133) -- cycle ; \draw  [line width=1.5]  (140.24,118.13) .. controls (140.81,118.04) and (141.4,117.99) .. (141.99,117.99) .. controls (149.16,117.98) and (154.99,124.69) .. (155,132.97) .. controls (155.02,141.26) and (149.22,147.99) .. (142.04,148.01) .. controls (142.03,148.01) and (142.02,148.01) .. (142,148.01) ;
 	%Straight Lines [id:da4482726275271183] 
 	\draw    (150,121) -- (143.12,118.98) ;
 	\draw [shift={(140.24,118.13)}, rotate = 376.40999999999997] [fill={rgb, 255:red, 0; green, 0; blue, 0 }  ][line width=0.08]  [draw opacity=0] (8.93,-4.29) -- (0,0) -- (8.93,4.29) -- cycle    ;
 	%Straight Lines [id:da2830703782839088] 
 	\draw  [dash pattern={on 4.5pt off 4.5pt}]  (195,66) -- (192,147) ;
 	%Shape: Brace [id:dp06778232658161376] 
 	\draw  [line width=1.5]  (195,147) .. controls (199.67,147.06) and (202.03,144.76) .. (202.09,140.09) -- (202.37,118.09) .. controls (202.46,111.42) and (204.83,108.12) .. (209.5,108.18) .. controls (204.83,108.12) and (202.54,104.76) .. (202.62,98.09)(202.59,101.09) -- (202.9,76.09) .. controls (202.96,71.42) and (200.66,69.06) .. (195.99,69) ;
 	%Shape: Brace [id:dp36142542073028316] 
 	\draw  [line width=1.5]  (113,151) .. controls (113.06,155.67) and (115.42,157.97) .. (120.09,157.91) -- (142.59,157.62) .. controls (149.26,157.54) and (152.62,159.83) .. (152.68,164.5) .. controls (152.62,159.83) and (155.92,157.46) .. (162.59,157.37)(159.59,157.41) -- (185.09,157.08) .. controls (189.76,157.02) and (192.06,154.66) .. (192,149.99) ;
%Shape: Circle [id:dp8459499845851355] 
%\draw  [line width=1.2]  (215,26.5) .. controls (215,20.15) and (220.15,15) .. (226.5,15) .. controls (232.85,15) and (238,20.15) .. (238,26.5) .. controls (238,32.85) and (232.85,38) .. (226.5,38) .. controls (220.15,38) and (215,32.85) .. (215,26.5) -- cycle ;
%Shape: Circle [id:dp9488456942872145] 
%\draw  [line width=1.2]  (29,27.5) .. controls (29,21.15) and (34.15,16) .. (40.5,16) .. controls (46.85,16) and (52,21.15) .. (52,27.5) .. controls (52,33.85) and (46.85,39) .. (40.5,39) .. controls (34.15,39) and (29,33.85) .. (29,27.5) -- cycle ;
%Shape: Circle [id:dp05587054045189843] 
%\draw  [line width=1.2]  (25,223.5) .. controls (25,217.15) and (30.15,212) .. (36.5,212) .. controls (42.85,212) and (48,217.15) .. (48,223.5) .. controls (48,229.85) and (42.85,235) .. (36.5,235) .. controls (30.15,235) and (25,229.85) .. (25,223.5) -- cycle ;
%Shape: Circle [id:dp7774300727276198] 
%\draw  [line width=1.2]  (213,222.5) .. controls (213,216.15) and (218.15,211) .. (224.5,211) .. controls (230.85,211) and (236,216.15) .. (236,222.5) .. controls (236,228.85) and (230.85,234) .. (224.5,234) .. controls (218.15,234) and (213,228.85) .. (213,222.5) -- cycle ;
 
 	% Text Node
 	\draw (246,145) node  [font=\Large]  {$\chi $};
 	% Text Node
 	\draw (110,17) node  [font=\Large]  {$\Upsilon $};
 	% Text Node
 	\draw (171,132) node   [font=\Large] {$\theta _{\mathcal{N}}$};
 	% Text Node
 	\draw (155,77) node  [font=\large]  {$\vv{\mathcal{N}}$};
 	% Text Node
 	\draw (228,107) node  [font=\large]  {$\mathcal{N}_{\Upsilon }$};
 	% Text Node
 	\draw (157,180) node  [font=\large]  {$\mathcal{N}_{\chi }$};
% Text Node
%\draw (229,26.5) node  [font=\large]  {$\mathcal{I}_{1}$};
% Text Node
%\draw (43,27.5) node  [font=\large]  {$\mathcal{I}_{2}$};
% Text Node
%\draw (39,222) node  [font=\large]  {$\mathcal{I}_{3}$};
% Text Node
%\draw (227,222) node  [font=\large]  {$\mathcal{I}_{4}$};
 	\end{tikzpicture}}
 	\caption{Representation of the NSV in the $\chi-\Upsilon$ plane}
 	\label{F-31}
 \end{figure}
 In this section the main results, which are based on the so-called $H_{\beta}$-condition \cite{beker2004fundamental,guo2015analysis,banos2011reset}, are presented. Let 
 \begin{equation}\label{E-31}
 C_0=[\rho\quad\beta C],\quad B_0=\begin{bmatrix} 1 \\ 0_{n_p\times 1} \end{bmatrix},\quad \rho>0,\quad\beta\in\mathbb{R}.
 \end{equation}
 The $H_\beta$ condition, in the case of the PCI and of the GFORE, states that the reset control system (\ref{E-23}) with $-1\leq\gamma\leq1$, and $r=d=0$ is quadratically stable if and only if there exist $\rho>0$ and $\beta$ such that the transfer function 
 \begin{equation}\label{E-311}
 H(s)=C_0(sI-\bar{A})^{-1}B_0
 \end{equation} 
 is Strictly Positive Real (SPR). This condition requires finding the parameters $\rho$ and $\beta$, which may be very difficult when the system has a high order transfer function. In the following, a method to determine stability without finding $\rho$ and $\beta$ is proposed.  
 
 To this end, define the Nyquist Stability Vector (NSV=$\vv{\mathcal{N}}(\omega)\in\mathbb{R}^2$) in a plane with axis $\chi-\Upsilon$ (see Fig. \ref{F-31}) as follows. 
 \begin{definition}\label{D1}
 	The Nyquist Stability Vector is, for all $\omega\in\mathbb{R}^+$, the vector 
	$$\vv{\mathcal{N}}(\omega)=[\mathcal{N}_\chi \quad \mathcal{N}_\Upsilon]^T=$$  
	$$[\abs{L(j\omega)+\frac{1}{2}}^2-\frac{1}{4},\ \quad \Re(L(j\omega)\cdot C_R(j\omega))+\Re(C_R(j\omega))]^T,$$
$L(s)=\mathcal{L}(s)C_R(s)$.
  \end{definition}
Let, for simplicity and without loss of generality, $\phase{\vv{\mathcal{N}}(\omega)}=\theta_{\mathcal{N}}\in[-\frac{\pi}{2},\ \frac{3\pi}{2})$, and define the open sets
$$\mathcal{I}_1=\left\{\omega\in\mathbb{R}^+|\ 0<\phase{\vv{\mathcal{N}}(\omega)}<\frac{\pi}{2}\right\},$$ 
$$\mathcal{I}_2=\left\{\omega\in\mathbb{R}^+|\ \dfrac{\pi}{2}<\phase{\vv{\mathcal{N}}(\omega)}<\pi\right\},$$
$$\mathcal{I}_3=\left\{\omega\in\mathbb{R}^+|\ \pi<\phase{\vv{\mathcal{N}}(\omega)}<\dfrac{3\pi}{2}\right\},$$
$$\mathcal{I}_4=\left\{\omega\in\mathbb{R}^+|\ -\dfrac{\pi}{2}<\phase{\vv{\mathcal{N}}(\omega)}<0\right\}.$$	
	Defice now the $H_\beta$ circle in the complex plane with centre $(-\frac{1}{2},\ 0)$ and radius $\frac{1}{2}$ (see Fig. \ref{F-32}). Then, the following statements hold.
	\begin{itemize}
		\item  For all $\omega$ such that $L(j\omega)$ is outside the $H_\beta$ circle $\mathcal{N}_\chi>0$.
		\item  For all $\omega$ such that $L(j\omega)$ is on the $H_\beta$ circle $\mathcal{N}_\chi=0$.
		\item  For all $\omega$ such that $L(j\omega)$ is inside the $H_\beta$ circle $\mathcal{N}_\chi<0$.
	\end{itemize} 		
On the basis of the definition of the NSV, systems of Type I and of Type II, which are used to assess the stability of the reset control systems, are defined.  

\begin{definition}\label{D2}
The reset control system (\ref{E-23}) is of Type I if the following conditions hold. 
\begin{enumerate}[(1)]
\item For all $\omega\in\mathcal{M}=\{\omega\in\mathbb{R}^+|\ \mathcal{N}_\chi(\omega)=0\}$ one has $\mathcal{N}_\Upsilon(\omega)>0$.
\item For all $\omega\in\mathcal{Q}=\{\omega\in\mathbb{R}^+|\ \mathcal{N}_\Upsilon(\omega)=0\}$ one has $\mathcal{N}_\chi(\omega)>0$. 
\item At least one of the following statements is true:
\begin{enumerate}
	\item $\forall\ \omega\in\mathbb{R}^+:\ \mathcal{N}_\Upsilon(\omega)\geq0,$
	\item $\forall\ \omega\in\mathbb{R}^+:\ \mathcal{N}_\chi(\omega)\geq0,$
	\item Let $\delta_1=\underset{\omega\in\mathcal{I}_4}{\max}\abs{\dfrac{\mathcal{N}_\Upsilon(\omega)}{\mathcal{N}_\chi(\omega)}}$ and $\Psi_1=\underset{\omega\in\mathcal{I}_2}{\min}\abs{\dfrac{\mathcal{N}_\Upsilon(\omega)}{\mathcal{N}_\chi(\omega)}}$. Then $\delta_1<\Psi_1$ and $\mathcal{I}_3=\varnothing$.
\end{enumerate}
	\end{enumerate}	
 \end{definition}

\begin{remark}\label{Rs1}
Let
\begin{equation}\label{E-333}
\theta_{1}=\underset{\omega\in\mathbb{R}^+}{\min}\phase{\vv{\mathcal{N}}(\omega)}=\phase{\vv{\mathcal{N}}_1}\text{ and }\theta_{2}=\underset{\omega\in\mathbb{R}^+}{\max}\phase{\vv{\mathcal{N}}(\omega)}=\phase{\vv{\mathcal{N}}_2},
\end{equation}
, where $\vv{\mathcal{N}}_1$ and $\vv{\mathcal{N}}_2$ are implicitly defined by equation~(\ref{E-333}). Then, the conditions identifying Type I systems are equivalent to the condition  
\begin{equation}\label{E-3333}
\left(-\dfrac{\pi}{2}<\theta_{1}<\pi\right)\ \land\ \left(-\dfrac{\pi}{2}<\theta_{2}<\pi\right)\ \land\ (\theta_{2}-\theta_{1}<\pi).
\end{equation}	
\end{remark}

\begin{definition}\label{D3}
The reset control system (\ref{E-23}) is of Type II if the following conditions hold: 
	\begin{enumerate}[(1)]
		\item $\mathcal{L}(s)$ does not have any pole at origin.
		\item For all $\omega\in\mathcal{M}$ one has $\mathcal{N}_\Upsilon(\omega)>0$.
		\item For all $\omega\in\mathcal{Q}$ one has $\mathcal{N}_\chi(\omega)<0$ 
		\item At least, one of the following statements is true:
		\begin{enumerate}
			\item $\forall\ \omega\in\mathbb{R}^+:\ \mathcal{N}_\Upsilon(\omega)\geq0$
			\item $\forall\ \omega\in\mathbb{R}^+:\ \mathcal{N}_\chi(\omega)\leq0$
			\item Let $\delta_2=\underset{\omega\in\mathcal{I}_3}{\max}\abs{\dfrac{\mathcal{N}_\Upsilon(\omega)}{\mathcal{N}_\chi(\omega)}}$ and $\Psi_2=\underset{\omega\in\mathcal{I}_1}{\min}\abs{\dfrac{\mathcal{N}_\Upsilon(\omega)}{\mathcal{N}_\chi(\omega)}}$. Then, $\delta_2<\Psi_2$ and $\mathcal{I}_4=\varnothing$.  
		\end{enumerate}
	\end{enumerate}	
\end{definition}		

\begin{remark}\label{Rs2}
	The conditions identifying the Type II systems are equivalent to the following conditions. 
		\begin{enumerate}[(1)]
		\item $\mathcal{L}(s)$ does not have any pole at origin.
	    \item \begin{equation}\label{E-555}
	\left(0<\theta_{1}<\dfrac{3\pi}{2}\right)\ \land\ \left(0<\theta_{2}<\dfrac{3\pi}{2}\right)\ \land\ (\theta_{2}-\theta_{1}<\pi).
	\end{equation}
	\end{enumerate}	
\end{remark}
 %%%%%%%%%%%%%%%FIGURE CIRCLE HB
 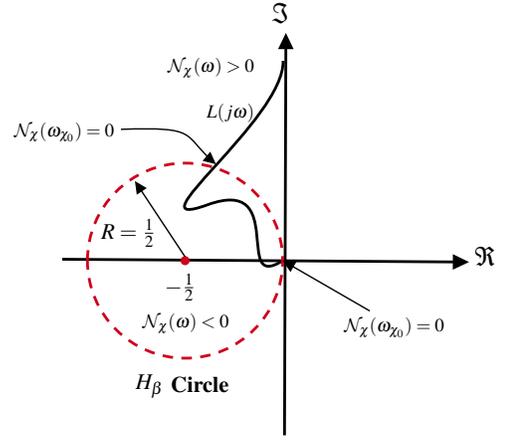
\begin{figure}[!t]
 	\centering
 	\resizebox{0.78\columnwidth}{!}{%
 		\tikzset{every picture/.style={line width=0.75pt}} %set default line width to 0.75pt        
 		\begin{tikzpicture}[x=0.75pt,y=0.75pt,yscale=-1,xscale=1]
 		%uncomment if require: \path (0,316); %set diagram left start at 0, and has height of 316
 		%Straight Lines [id:da6055645258040092] 
 		\draw [line width=1.5]    (64.5,187) -- (351.5,188.97) ;
 		\draw [shift={(355.5,189)}, rotate = 180.39] [fill={rgb, 255:red, 0; green, 0; blue, 0 }  ][line width=0.08]  [draw opacity=0] (11.61,-5.58) -- (0,0) -- (11.61,5.58) -- cycle    ;
 		%Straight Lines [id:da49677046907762845] 
 		\draw [line width=1.5]    (223.5,313) -- (224.49,29) ;
 		\draw [shift={(224.5,25)}, rotate = 450.2] [fill={rgb, 255:red, 0; green, 0; blue, 0 }  ][line width=0.08]  [draw opacity=0] (11.61,-5.58) -- (0,0) -- (11.61,5.58) -- cycle    ;
 		%Curve Lines [id:da9214865670546037] 
 		\draw [line width=1.5]    (222.5,45) .. controls (223.5,87) and (110.5,169) .. (168.5,148) .. controls (226.5,127) and (186.19,211) .. (222.19,188) ;
 		%Shape: Circle [id:dp7067581464965398] 
 		\draw  [color={rgb, 255:red, 208; green, 2; blue, 27 }  ,draw opacity=1 ][dash pattern={on 5.63pt off 4.5pt}][line width=1.5]  (82.56,188) .. controls (82.56,149.44) and (113.82,118.19) .. (152.38,118.19) .. controls (190.93,118.19) and (222.19,149.44) .. (222.19,188) .. controls (222.19,226.56) and (190.93,257.81) .. (152.38,257.81) .. controls (113.82,257.81) and (82.56,226.56) .. (82.56,188) -- cycle ;
 		%Shape: Circle [id:dp8739736620649443] 
 		\draw  [color={rgb, 255:red, 208; green, 2; blue, 27 }  ,draw opacity=1 ][fill={rgb, 255:red, 208; green, 2; blue, 27 }  ,fill opacity=1 ] (149.63,188) .. controls (149.63,186.48) and (150.86,185.25) .. (152.38,185.25) .. controls (153.89,185.25) and (155.13,186.48) .. (155.13,188) .. controls (155.13,189.52) and (153.89,190.75) .. (152.38,190.75) .. controls (150.86,190.75) and (149.63,189.52) .. (149.63,188) -- cycle ;
 		%Straight Lines [id:da1143977073493927] 
 		\draw    (152.38,185.25) -- (118.15,133.5) ;
 		\draw [shift={(116.5,131)}, rotate = 416.52] [fill={rgb, 255:red, 0; green, 0; blue, 0 }  ][line width=0.08]  [draw opacity=0] (8.93,-4.29) -- (0,0) -- (8.93,4.29) -- cycle    ;
 		%Curve Lines [id:da059829590103515295] 
 		\draw [line width=0.75]    (106.5,94) .. controls (162.76,92.06) and (150.33,97.65) .. (172.35,117.15) ;
 		\draw [shift={(174.5,119)}, rotate = 220.03] [fill={rgb, 255:red, 0; green, 0; blue, 0 }  ][line width=0.08]  [draw opacity=0] (8.93,-4.29) -- (0,0) -- (8.93,4.29) -- cycle    ;
 		%Straight Lines [id:da7410939168410577] 
 		\draw    (284.5,224) -- (224.79,189.5) ;
 		\draw [shift={(222.19,188)}, rotate = 390.02] [fill={rgb, 255:red, 0; green, 0; blue, 0 }  ][line width=0.08]  [draw opacity=0] (8.93,-4.29) -- (0,0) -- (8.93,4.29) -- cycle    ;
 		
 		% Text Node
 		\draw (367,185) node  [font=\Large]  {$\Re $};
 		% Text Node
 		\draw (221,11) node  [font=\Large]  {$\Im $};
 		% Text Node
 		\draw (149,207) node  [scale=1.2]  {$-\frac{1}{2}$};
 		% Text Node
 		\draw (112,168) node  [scale=1.2]  {$R=\frac{1}{2}$};
 		% Text Node
 		\draw (171,50) node    {$\mathcal{N}_{\chi }( \omega )  >0$};
 		% Text Node
 		\draw (153,233) node    {$\mathcal{N}_{\chi }(\omega) < 0$};
 		% Text Node
 		\draw (66,96) node    {$\mathcal{N}_{\chi }( \omega_{\chi_0}) =0$};
 		% Text Node
 		\draw (302,236) node    {$\mathcal{N}_{\chi }(\omega_{\chi_0}) =0$};
 		% Text Node
 		\draw (185,82) node    {$L( j\omega )$};
 		% Text Node
 		%\draw (107,277) node   [align=left] {{\fontfamily{ptm}\selectfont {\large %\textbf{Critical}}}};
 		% Text Node
 		\draw (127,278) node  [font=\large]  {$H_{\beta}$};
 		% Text Node
 		\draw (163,277) node   [align=left] {{\fontfamily{ptm}\selectfont {\large \textbf{Circle}}}};
 		\end{tikzpicture}
 	}
 	\caption{$H_\beta$ circle in the Nyquist diagram}
 	\label{F-32}
 \end{figure}
On the basis of the above definitions the main result of this paper, which is a frequency-domain tool for determining stability of reset control systems, is presented.
 \begin{theorem}\label{T1}
 	The reset control system (\ref{E-23}) with GFORE or PCI is UBIBS stable if all the following conditions are satisfied.	
 	\begin{itemize}
 		\item The base linear system is stable and the open-loop transfer function does not have any pole-zero cancellation.
 		\item The reset control system (\ref{E-23}) is either of Type I and/or of Type II. 
 	\end{itemize}
 \end{theorem}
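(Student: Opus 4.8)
The plan is to reduce UBIBS stability to the quoted $H_\beta$ condition and then to turn the search for the certifying parameters $\rho>0$ and $\beta$ into a purely geometric feasibility question about the NSV. First I would invoke the $H_\beta$ result as stated: the system (\ref{E-23}) with $r=d=0$ is quadratically stable if and only if there exist $\rho>0$ and $\beta$ making $H(s)=C_0(sI-\bar A)^{-1}B_0$ in (\ref{E-311}) SPR. I would then recall, within the same framework, that quadratic stability -- a common quadratic form $V=x^{T}Px$ decreasing along the flow and non-increasing across the jump $x\mapsto\bar A_\rho x$ for every $\gamma\in[-1,1]$ -- already delivers UBIBS stability in the presence of the inputs $r$ and $d$. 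Hence it suffices to exhibit admissible $\rho,\beta$.

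Next I would dispose of the non-frequency part of SPR. Since an SPR transfer function must be Hurwitz, I would use the first hypothesis: ``base linear system stable'' forces $\bar A$ to be Hurwitz, while ``no pole-zero cancellation'' makes the realization underlying $H$ minimal, so that the poles of $H$ are exactly the eigenvalues of $\bar A$ and the strict KYP construction of $P$ is available. This leaves the frequency-domain inequality $\Re H(j\omega)>0$ for all $\omega$, together with the correct behaviour as $\omega\to\infty$.

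The technical heart is the explicit evaluation of $\Re H(j\omega)$ for the GFORE and the PCI. Exploiting the block structure of $\bar A$ with $B_0=[1\ \ 0]^{T}$ and $C_0=[\rho\ \ \beta C]$, I would solve $(sI-\bar A)^{-1}B_0=[p\ \ q^{T}]^{T}$ blockwise; for the GFORE this gives $p=1/(s+\omega_r+\omega_r\mathcal{L}(s))$ and $Cq=\omega_r\mathcal{L}(s)p$, hence $H=p\,(\rho+\beta\omega_r\mathcal{L})$ with denominator $(s+\omega_r)(1+L(s))$, and the PCI is handled analogously. Clearing the positive factor $\abs{D(j\omega)}^{2}$, I would then establish the identity
\[
\Re H(j\omega)=\frac{\kappa(\omega)}{\abs{D(j\omega)}^{2}}\,\bigl(\rho\,\mathcal{N}_\Upsilon(\omega)+\beta\,\mathcal{N}_\chi(\omega)\bigr),\qquad \kappa(\omega)>0,
\]
so that the first NSV component reproduces exactly the $H_\beta$-circle trichotomy (the three bullet statements preceding Definition~\ref{D2}) and governs the $\beta$-term, while the second governs the $\rho$-term. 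Verifying this identity -- in particular matching the $\beta$-coefficient to $\abs{L(j\omega)+\tfrac12}^{2}-\tfrac14$ and the $\rho$-coefficient to $\mathcal{N}_\Upsilon$, and tracking the $\omega\to\infty$ and low-frequency limits -- is the step I expect to be the main obstacle, both for the algebra and because those limits are precisely what forces the extra ``no pole at the origin'' hypothesis in the Type II case (Definition~\ref{D3}).

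Finally I would solve the feasibility problem. By the displayed identity, admissible $(\rho,\beta)$ exist iff the fixed vector $[\beta\ \ \rho]$ has strictly positive inner product with $\vv{\mathcal{N}}(\omega)$ for every $\omega$, i.e.\ iff all NSV directions lie in one open half-plane whose inward normal has positive $\Upsilon$-component (which is what enforces $\rho>0$). A separating-line argument shows this holds exactly when the phases $\{\phase{\vv{\mathcal{N}}(\omega)}\}$ occupy an arc of width less than $\pi$ that fits inside an admissible half-plane, and the two admissible placements are precisely the Type I configuration (Remark~\ref{Rs1}, normal pointing into the $\mathcal{N}_\chi\ge0$ region) and the Type II configuration (Remark~\ref{Rs2}, where $\mathcal{N}_\chi$ may be negative and the no-pole-at-origin condition controls the low-frequency end). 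The conditions on the zero sets $\mathcal{M}$ and $\mathcal{Q}$ and the ratio tests $\delta_i<\Psi_i$ on the sectors $\mathcal{I}_k$ are the strict-inequality bookkeeping guaranteeing positivity of the inner product at the boundary frequencies where one component of $\vv{\mathcal{N}}$ vanishes. Combining, membership in Type I and/or Type II yields admissible $\rho>0$ and $\beta$, hence SPR of $H$, quadratic stability, and therefore UBIBS stability.
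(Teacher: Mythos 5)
Your overall route is the same as the paper's: invoke the $H_\beta$ characterization, write $\Re H(j\omega)$ as a positive multiple of the inner product $\vv{\xi}\cdot\vv{\mathcal{N}}(\omega)$ with $\vv{\xi}=[\beta\ \ \rho]^T$ (your explicit identity is exactly the paper's equations (\ref{E-32})--(\ref{E-34}), with $\rho$ replaced by $\rho^\prime=\rho/C_r$), and then turn existence of an admissible $\vv{\xi}$ with $\rho>0$ into the condition that the NSV phases occupy an arc of width less than $\pi$ inside an admissible half-plane; the paper carries this out by the case analysis (\ref{E-39})--(\ref{E-50}) and your separating-line argument is an equivalent packaging. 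Your treatment of minimality, of the $\omega\to\infty$ behaviour, and of the low-frequency limit forcing $\beta>0$ (hence the ``no pole at the origin'' restriction for Type II) also mirrors Steps 2--4 of the paper.

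The genuine gap is the very first reduction: you propose to ``recall'' that quadratic stability of (\ref{E-23}) delivers UBIBS stability in the presence of the inputs $r$ and $d$. No such result is available for the partial-reset case: the UBIBS proof in \cite{beker2004fundamental} is established only for $\gamma=0$, and extending it to $-1<\gamma<1$ is precisely one of the contributions the paper claims (its abstract states the $H_\beta$ method ``cannot be used to assess UBIBS stability \dots in the case of reseting to non-zero values''). The paper therefore does not stop at quadratic stability; in Step 5 it reworks the $\gamma$-dependent part of that argument, using $e(t_i)=0$ at reset instants to bound the rate of change of $x_r(t_i)$ with respect to $t_i$, the contraction $\abs{x_r(t_i^+)}\leq\abs{x_r(t_i)}$, and the resulting estimate (\ref{E-52}) on the reset state between consecutive reset times, before concluding UBIBS. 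Without an argument of this kind your proof establishes only quadratic (internal) stability for $r=d=0$, not the UBIBS claim of the theorem; everything upstream of that point is sound and matches the paper.
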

\begin{proof}
 	 Theorem \ref{T1} is proved in several steps.
 	\begin{itemize}
 		\item Step 1: It is shown that, by Hypothesis (II) of Theorem \ref{T1}, it is possible to find $\beta$ and $\rho>0$ such that $\Re(H(j\omega))>0,\ \forall\ \omega\in\mathbb{R}^+$. 
 		\item Step 2: For systems with poles at origin, it is shown that $\displaystyle\lim_{\omega\to 0} \Re(H(j\omega))>0$.
 		\item Step 3: It is shown that either $\displaystyle\lim_{s\to \infty} H(s)>0$ or $\displaystyle\lim_{\omega\to \infty} \omega^2\Re(H(j\omega))>0.$
 		\item Step 4: It is shown that $(A,C_0)$ and $(A,B_0)$ are observable and controllable, respectively.
 		\item Step 5: It is concluded that $H(s)$ is SPR and the $H_\beta$ condition is satisfied. Then it is proved that for GFORE or PCI, the reset control system (\ref{E-23}) is UBIBS stable.
 	\end{itemize}  
Step 1: The transfer function (\ref{E-311}) can be rewritten as 
\begin{equation}\label{E-32}
H(s)=\dfrac{y}{r}=\dfrac{\beta L(s)+\rho^\prime C_R(s)}{1+L(s)}, \text{ (see also Fig. \ref{F-33})}.
\end{equation}
Let $L(j\omega)=a+bj$ and $C_R(j\omega)=a_R+b_Rj$. Then, 	
\begin{equation}\label{E-33}
\Re(H(j\omega))=\dfrac{\beta\left((a+\frac{1}{2})^2+b^2-\frac{1}{4}\right)+\rho^\prime\left(a_Ra+b_rb+a_R\right)}{(a+1)^2+b^2}.
\end{equation}
Define now the vector $\vv{\xi}\in\mathbb{R}^2$ as $\vv{\xi}=[\beta\quad\rho]^T$ in the $\chi-\Upsilon$ plane. Using Definition \ref{D1}, equation (\ref{E-33}) can be re-written as 
\begin{equation}\label{E-34}
\Re(H(j\omega))=\dfrac{\vv{\xi}\cdot\vv{\mathcal{N}}}{(a+1)^2+b^2}.
\end{equation}
Then, the $H_\beta$ condition reduces to 
\begin{equation}\label{E-36}
\begin{array}{*{35}{c}}
\forall\omega\in\mathbb{R}^+:\ \Re(H(j\omega))>0\iff\vv{\xi}\cdot\vv{\mathcal{N}}>0\iff\\
-\frac{\pi}{2}<\phase{(\vv{\xi},\vv{\mathcal{N}})}<\frac{\pi}{2}\ \land\ \abs{\vv{\mathcal{N}}}\neq 0\ \land\ \abs{\vv{\mathcal{\xi}}}\neq 0.
\end{array}
\end{equation}
By (\ref{E-333}), $\forall\omega\in\mathbb{R}^+$, $\vv{\mathcal{N}}(\omega)$ is placed between the vectors $\vv{\mathcal{N}_1}$ and $\vv{\mathcal{N}_2}$ illustrated in Fig. \ref{F-34}. In other words,
\begin{equation}\label{E-37}
\forall\ \omega\in\mathbb{R}^+:\ \theta_1\leq\phase{\vv{\mathcal{N}}(\omega)}\leq\theta_2.
\end{equation}   
 If $\beta>0$, since $0<\phase{\vv{\xi}}=\theta_\xi<\frac{\pi}{2}$, then $\theta_1\in(-\dfrac{\pi}{2},\pi)$ and $\theta_2\in(-\dfrac{\pi}{2},\pi)$. This implies the conditions (1) and (2) in Definition~\ref{D2} and $\mathcal{I}_3=\varnothing$. If $\beta\leq0$, then $\theta_1\in(0,\dfrac{3\pi}{2})$ and $\theta_2\in(0,\dfrac{3\pi}{2})$, which implies the conditions (1) and (2) in Definition~\ref{D3} hold and $\mathcal{I}_4=\varnothing$. If $\theta_1\in[0,\dfrac{\pi}{2}]$ and $\theta_2\in[0,\dfrac{\pi}{2}]$, then
\begin{equation}\label{E-39}
\Re(H(j\omega))>0\iff
\begin{cases}
\theta_{\xi }\in(0,\dfrac{\pi}{2})\iff\beta>0,   \\
\theta_{\xi }\in[\dfrac{\pi}{2},\dfrac{\pi}{2}+\theta_{1})\Rightarrow\beta\leq 0\ \land\ \theta_1>0.\\
\end{cases}
\end{equation}
If $\theta_1\in[0,\dfrac{\pi}{2}]$ and $\theta_2\in[\dfrac{\pi}{2},\pi]$, then
\begin{equation}\label{E-40}
\Re(H(j\omega))>0\iff
\begin{cases}
\theta_{\xi }\in(\theta_{2}-\dfrac{\pi}{2},\dfrac{\pi}{2})\Rightarrow\beta>0\land\ \theta_2<\pi,   \\
\theta_{\xi }\in[\dfrac{\pi}{2},\dfrac{\pi}{2}+\theta_{1})\Rightarrow\beta\leq 0\ \land\ \theta_1>0.\\
	\end{cases}
\end{equation}
If $\theta_1\in[\dfrac{\pi}{2},\pi]$ and $\theta_2\in[\dfrac{\pi}{2},\pi]$, then 
\begin{equation}\label{E-41}
\Re(H(j\omega))>0\iff
	\begin{cases}
\theta_{\xi }\in(\theta_{2}-\dfrac{\pi}{2},\dfrac{\pi}{2})\Rightarrow\beta>0\land\ \theta_2<\pi,   \\
\theta_{\xi }\in[\dfrac{\pi}{2},\pi)\iff\beta\leq 0.\\
	\end{cases}
\end{equation}
If $\theta_1\in[\dfrac{\pi}{2},\dfrac{3\pi}{2})$ and $\theta_2\in[\pi,\dfrac{3\pi}{2})$, then
\begin{equation}\label{E-42}
\Re(H(j\omega))>0\iff\theta_{\xi}\in(\theta_{2}-\dfrac{\pi}{2},\pi)\Rightarrow\beta<0.
\end{equation}
If $\theta_1\in(0,\dfrac{\pi}{2}]$ and $\theta_2\in[\pi,\dfrac{3\pi}{2})$, then $\Re(H(j\omega))>0$ if and only if
\begin{equation}\label{E-45}
\left(\theta_{\xi}\in(\theta_{2}-\dfrac{\pi}{2},\theta_{1}+\dfrac{\pi}{2})\ \land\ \theta_{2}-\theta_{1}<\pi\right)\Rightarrow\beta<0.
\end{equation}
As a result.
\begin{equation}\label{E-46}
\theta_{2}-\theta_{1}<\pi\iff\delta_2<\psi_2.
\end{equation}
Hence, by (\ref{E-39})-(\ref{E-46}), Condition (3) of Definition \ref{D3} and Condition (2) of Remark \ref{Rs2} are obtained. If $\theta_1\in(-\dfrac{\pi}{2},0]$ and $\theta_2\in(-\dfrac{\pi}{2},\dfrac{\pi}{2}]$, then
\begin{equation}\label{E-48}
\Re(H(j\omega))>0\iff\theta_{\xi}\in(0,\theta_{1}+\dfrac{\pi}{2})\Rightarrow\beta>0.
\end{equation}
If $\theta_1\in(-\dfrac{\pi}{2},0]$ and $\theta_2\in[\dfrac{\pi}{2},\pi)$, then $\Re(H(j\omega))>0$ if and only if
\begin{equation}\label{E-49}
\left(\theta_{\xi}\in(\theta_{2}-\dfrac{\pi}{2},\theta_{1}+\dfrac{\pi}{2})\ \land\ \theta_{2}-\theta_{1}<\pi\right)\Rightarrow\beta>0,
\end{equation}
hence 
\begin{equation}\label{E-50}
\theta_{2}-\theta_{1}<\pi\iff\delta_1<\psi_1.
\end{equation}
Therefore, by (\ref{E-39})-(\ref{E-41}) and (\ref{E-48})-(\ref{E-50}), Condition (3) of Definition \ref{D2} and Remark \ref{Rs1} are obtained.
%%%%%%%%%FIGUREEEEEEE
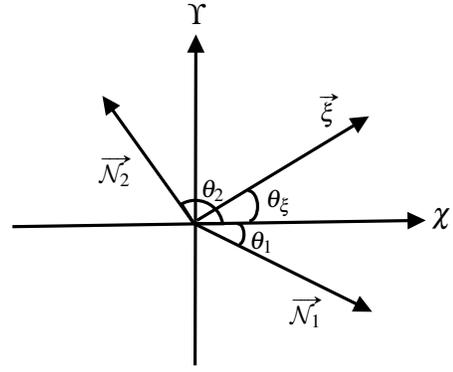
\begin{figure}[!t]
	\centering
	\resizebox{0.7\columnwidth}{!}{%
\tikzset{every picture/.style={line width=0.75pt}} %set default line width to 0.75pt        
\begin{tikzpicture}[x=0.75pt,y=0.75pt,yscale=-1,xscale=1]
%uncomment if require: \path (0,242); %set diagram left start at 0, and has height of 242
%Straight Lines [id:da6055645258040092] 
\draw [line width=1.5]    (7,146) -- (265,142.06) ;
\draw [shift={(269,142)}, rotate = 539.13] [fill={rgb, 255:red, 0; green, 0; blue, 0 }  ][line width=0.08]  [draw opacity=0] (11.61,-5.58) -- (0,0) -- (11.61,5.58) -- cycle    ;
%Straight Lines [id:da49677046907762845] 
\draw [line width=1.5]    (122.75,234.01) -- (123.24,28.01) ;
\draw [shift={(123.25,24.01)}, rotate = 450.14] [fill={rgb, 255:red, 0; green, 0; blue, 0 }  ][line width=0.08]  [draw opacity=0] (11.61,-5.58) -- (0,0) -- (11.61,5.58) -- cycle    ;
%Straight Lines [id:da3706867081231975] 
\draw [line width=1.5]    (124,142) -- (229.58,78.07) ;
\draw [shift={(233,76)}, rotate = 508.8] [fill={rgb, 255:red, 0; green, 0; blue, 0 }  ][line width=0.08]  [draw opacity=0] (11.61,-5.58) -- (0,0) -- (11.61,5.58) -- cycle    ;
%Straight Lines [id:da33185964065320306] 
\draw [line width=1.5]    (122,144) -- (231.42,198.22) ;
\draw [shift={(235,200)}, rotate = 206.36] [fill={rgb, 255:red, 0; green, 0; blue, 0 }  ][line width=0.08]  [draw opacity=0] (11.61,-5.58) -- (0,0) -- (11.61,5.58) -- cycle    ;
%Straight Lines [id:da6769592670236138] 
\draw [line width=1.5]    (122,144) -- (66.33,66.25) ;
\draw [shift={(64,63)}, rotate = 414.4] [fill={rgb, 255:red, 0; green, 0; blue, 0 }  ][line width=0.08]  [draw opacity=0] (11.61,-5.58) -- (0,0) -- (11.61,5.58) -- cycle    ;
%Shape: Arc [id:dp41726630989728086] 
\draw  [draw opacity=0][line width=1.5]  (156.09,123.25) .. controls (156.19,123.24) and (156.29,123.23) .. (156.39,123.23) .. controls (159.49,123.1) and (162.17,127.28) .. (162.38,132.56) .. controls (162.6,137.85) and (160.26,142.23) .. (157.16,142.36) .. controls (157.15,142.36) and (157.14,142.36) .. (157.13,142.36) -- (156.78,132.79) -- cycle ; \draw  [line width=1.5]  (156.09,123.25) .. controls (156.19,123.24) and (156.29,123.23) .. (156.39,123.23) .. controls (159.49,123.1) and (162.17,127.28) .. (162.38,132.56) .. controls (162.6,137.85) and (160.26,142.23) .. (157.16,142.36) .. controls (157.15,142.36) and (157.14,142.36) .. (157.13,142.36) ;
%Shape: Arc [id:dp572892719221719] 
\draw  [draw opacity=0][line width=1.5]  (149.32,144.38) .. controls (149.55,144.28) and (149.8,144.21) .. (150.05,144.2) .. controls (151.94,144.13) and (153.6,146.96) .. (153.74,150.53) .. controls (153.88,154.1) and (152.46,157.06) .. (150.57,157.14) .. controls (150.56,157.14) and (150.56,157.14) .. (150.55,157.14) -- (150.31,150.67) -- cycle ; \draw  [line width=1.5]  (149.32,144.38) .. controls (149.55,144.28) and (149.8,144.21) .. (150.05,144.2) .. controls (151.94,144.13) and (153.6,146.96) .. (153.74,150.53) .. controls (153.88,154.1) and (152.46,157.06) .. (150.57,157.14) .. controls (150.56,157.14) and (150.56,157.14) .. (150.55,157.14) ;
%Shape: Path Data [id:dp2486726505671475] 
\draw  [line width=1.5]  (122,144) -- (114.15,131.67) .. controls (116.73,130.18) and (119.72,129.32) .. (122.91,129.32) .. controls (131.65,129.32) and (138.89,135.73) .. (140.2,144.09) -- (122,144) -- cycle ;
% Text Node
\draw (279,141) node  [font=\Large]  {$\chi $};
% Text Node
\draw (124,11) node  [font=\Large]  {$\Upsilon $};
% Text Node
\draw (207,73) node  [font=\large]  {$\vv{\xi }$};
% Text Node
\draw (193,201) node  [font=\large]  {$\vv{\mathcal{N}_{1}}$};
% Text Node
\draw (72,111) node  [font=\large]  {$\vv{\mathcal{N}_{2}}$};
% Text Node
\draw (165,156) node [scale=1.2]   {$\theta _{1}$};
% Text Node
\draw (176,131) node [scale=1.2]   {$\theta _{\xi }$};
% Text Node
\draw (134,123) node [scale=1.2]  {$\theta _{2}$};
\end{tikzpicture}}
	\caption{Representation of $\vec{\mathcal{N}}(\omega)$ and $\vec{\xi}$ in the $\chi-\Upsilon$ plane}
\label{F-34}
\end{figure}
\newline Step 2: Let $L(s)=\dfrac{L^\prime(s)}{s^n},\ \text{with }n \geq 1,\ L^\prime(0)\neq 0$. Equation~(\ref{E-33}) yields
\begin{equation}\label{E-43}
\resizebox{\columnwidth}{!}{$
	\displaystyle\lim_{\omega\to 0}\Re(H(j\omega))=\displaystyle\lim_{|L|\to \infty}\dfrac{\beta|L|^2+\rho^\prime\left(|L||C_R(0)|\cos(\phase{\vv{C_R}(0),\vv{L(0)}})+\Re(C_R(0))\right)}{|L|^2}.$}
\end{equation}
For GFORE, equation (\ref{E-43}) becomes
\begin{equation}\label{E-44}
\resizebox{\columnwidth}{!}{$
	\displaystyle\lim_{\omega\to 0}\Re(H(j\omega))=\beta+\rho^\prime\displaystyle\lim_{|L|\to \infty}\dfrac{\cos(\phase{\vv{C_R}(0),\vv{L(0)}})}{|L|}+\dfrac{1}{|L|^2}=\beta>0,$}
\end{equation}
whereas in the case of PCI with $n=1$ (\ref{E-43}) becomes 
\begin{equation}\label{E-445}
\resizebox{\columnwidth}{!}{$
	\displaystyle\lim_{\omega\to 0}\Re(H(j\omega))=\beta+\rho^\prime\displaystyle\lim_{\omega\to 0}\left(\dfrac{|C_R|}{|L|}+\dfrac{1}{|L|^2}\right)=\beta+\dfrac{\rho^\prime\omega_r}{|\mathcal{L}(0)|}$}
\end{equation}
which, setting $\vv{\mathcal{N}^{\prime}}=[1\quad\dfrac{\rho^\prime\omega_r}{|\mathcal{L}(0)|}]^T$, yields
\begin{equation}\label{E-447}
\displaystyle\lim_{\omega\to 0}\Re(H(j\omega))=\vv{\xi}\cdot\vv{\mathcal{N}^{\prime}}.
\end{equation}  
In addition,
\begin{equation}\label{E-471}
\phase{\vv{\mathcal{N}^{\prime}}}=\displaystyle\lim_{\omega\to 0}\phase{\vv{\mathcal{N}}}\xRightarrow[]{\ (\ref{E-37})\ }\theta_{1}\leq\phase{\vv{\mathcal{N}^{\prime}}}\leq\theta_{2}.
\end{equation}
As a result, by Step 1, $\displaystyle\lim_{\omega\to 0} \Re(H(j\omega))=\vv{\xi}\cdot\vv{\mathcal{N}^{\prime}}>0$. For PCI with $n>1$
\begin{equation}\label{E-446}
\displaystyle\lim_{\omega\to 0}\Re(H(j\omega))=\beta+\rho^\prime\displaystyle\lim_{\omega\to 0}\dfrac{\omega^n\cos(\phase{\vv{C_R}(0),\vv{L(0)}})}{\omega}=\beta>0.
\end{equation}
It is therefore concluded that for systems with poles at the origin (i.e. $\mathcal{L}(s)=\dfrac{\mathcal{L}^\prime(s)}{s^n},\ n \geq 1,\ \mathcal{L}^\prime(0)\neq 0$), $\beta>0$. If $\mathcal{L}(s)$ does not have any pole at origin, $\beta$ can be either positive or negative.    
%%%%%%%%%%%%%%%FIGURE Block Diagram H(s)
\begin{figure}[!t]
	\centering
	\resizebox{0.9\columnwidth}{!}{%
		\tikzset{every picture/.style={line width=0.75pt}} %set default line width to 0.75pt        
		\begin{tikzpicture}[x=0.75pt,y=0.75pt,yscale=-1,xscale=1]
		%uncomment if require: \path (0,228); %set diagram left start at 0, and has height of 228	
		%Rounded Rect [id:dp5779728376664172] 
		\draw  [fill={rgb, 255:red, 241; green, 241; blue, 241 }  ,fill opacity=1 ][dash pattern={on 1.69pt off 2.76pt}][line width=1.5]  (241,98.45) .. controls (241,87.4) and (249.95,78.45) .. (261,78.45) -- (409,78.45) .. controls (420.05,78.45) and (429,87.4) .. (429,98.45) -- (429,158.45) .. controls (429,169.5) and (420.05,178.45) .. (409,178.45) -- (261,178.45) .. controls (249.95,178.45) and (241,169.5) .. (241,158.45) -- cycle ;
		%Shape: Rectangle [id:dp8872257254473093] 
		\draw  [line width=1.5]  (264.5,82) -- (343,82) -- (343,148) -- (264.5,148) -- cycle ;
		%Shape: Ellipse [id:dp19980167715656805] 
		\draw  [line width=1.5]  (48.63,112.45) .. controls (48.63,104.84) and (55.3,98.68) .. (63.53,98.68) .. controls (71.76,98.68) and (78.43,104.84) .. (78.43,112.45) .. controls (78.43,120.06) and (71.76,126.22) .. (63.53,126.22) .. controls (55.3,126.22) and (48.63,120.06) .. (48.63,112.45) -- cycle ;
		%Straight Lines [id:da0385434449901394] 
		\draw [line width=1.5]    (439,116) -- (440,217) -- (69.01,219.01) -- (65,219) -- (65.51,130.22) ;
		\draw [shift={(65.53,126.22)}, rotate = 450.33] [fill={rgb, 255:red, 0; green, 0; blue, 0 }  ][line width=0.08]  [draw opacity=0] (11.61,-5.58) -- (0,0) -- (11.61,5.58) -- cycle    ;
		%Straight Lines [id:da05885102875442705] 
		\draw [line width=1.5]    (5.5,114) -- (44.63,114.41) ;
		\draw [shift={(48.63,114.45)}, rotate = 180.6] [fill={rgb, 255:red, 0; green, 0; blue, 0 }  ][line width=0.08]  [draw opacity=0] (11.61,-5.58) -- (0,0) -- (11.61,5.58) -- cycle    ;
		%Straight Lines [id:da39309315496373975] 
		\draw [line width=1.5]    (421,116) -- (461,116) ;
		\draw [shift={(465,116)}, rotate = 180] [fill={rgb, 255:red, 0; green, 0; blue, 0 }  ][line width=0.08]  [draw opacity=0] (11.61,-5.58) -- (0,0) -- (11.61,5.58) -- cycle    ;
		%Straight Lines [id:da6065506865121402] 
		\draw [line width=1.5]    (344.49,116.5) -- (369.5,116.93) ;
		\draw [shift={(373.5,117)}, rotate = 180.99] [fill={rgb, 255:red, 0; green, 0; blue, 0 }  ][line width=0.08]  [draw opacity=0] (11.61,-5.58) -- (0,0) -- (11.61,5.58) -- cycle    ;
		%Shape: Rectangle [id:dp5467349831333912] 
		\draw  [line width=1.5]  (372.17,88.09) -- (420.5,88.09) -- (420.5,146) -- (372.17,146) -- cycle ;
		%Straight Lines [id:da663322415899998] 
		\draw [line width=1.5]    (79.43,115.45) -- (118.5,115.04) ;
		\draw [shift={(122.5,115)}, rotate = 539.4] [fill={rgb, 255:red, 0; green, 0; blue, 0 }  ][line width=0.08]  [draw opacity=0] (11.61,-5.58) -- (0,0) -- (11.61,5.58) -- cycle    ;
		%Straight Lines [id:da00759412524357761] 
		\draw [line width=1.5]    (204.49,115.5) -- (258,115.97) ;
		\draw [shift={(262,116)}, rotate = 180.5] [fill={rgb, 255:red, 0; green, 0; blue, 0 }  ][line width=0.08]  [draw opacity=0] (11.61,-5.58) -- (0,0) -- (11.61,5.58) -- cycle    ;
		%Shape: Path Data [id:dp5923398308383396] 
		\draw  [line width=1.5]  (163,85.09) -- (163,89.09) -- (203,89.09) -- (203,146) -- (162.17,146) -- (162.17,142) -- (122.17,142) -- (122.17,85.09) -- (163,85.09) -- cycle ;
		%Straight Lines [id:da43982489498145316] 
		\draw  [dash pattern={on 4.5pt off 4.5pt}]  (100.97,115.23) -- (101,133) -- (122,133) ;
		%Straight Lines [id:da13744033551850077] 
		\draw [line width=1.5]    (221,114) -- (222,32) -- (335,32) ;
		\draw [shift={(339,32)}, rotate = 180] [fill={rgb, 255:red, 0; green, 0; blue, 0 }  ][line width=0.08]  [draw opacity=0] (11.61,-5.58) -- (0,0) -- (11.61,5.58) -- cycle    ;
		%Shape: Rectangle [id:dp7162147582204084] 
		\draw  [line width=1.5]  (340,5.09) -- (424,5.09) -- (424,59) -- (340,59) -- cycle ;
		%Shape: Ellipse [id:dp7182141668768594] 
		\draw  [line width=1.5]  (518.63,116.45) .. controls (518.63,108.84) and (525.3,102.68) .. (533.53,102.68) .. controls (541.76,102.68) and (548.43,108.84) .. (548.43,116.45) .. controls (548.43,124.06) and (541.76,130.22) .. (533.53,130.22) .. controls (525.3,130.22) and (518.63,124.06) .. (518.63,116.45) -- cycle ;
		%Straight Lines [id:da1339701687633028] 
		\draw [line width=1.5]    (425,31) -- (533,32) -- (533.5,98.68) ;
		\draw [shift={(533.53,102.68)}, rotate = 269.57] [fill={rgb, 255:red, 0; green, 0; blue, 0 }  ][line width=0.08]  [draw opacity=0] (11.61,-5.58) -- (0,0) -- (11.61,5.58) -- cycle    ;
		%Shape: Rectangle [id:dp4461395306006082] 
		\draw  [line width=1.5]  (464.17,98) -- (492,98) -- (492,133) -- (464.17,133) -- cycle ;
		%Straight Lines [id:da6027984008314722] 
		\draw [line width=1.5]    (493.65,117.05) -- (515,117.01) ;
		\draw [shift={(519,117)}, rotate = 539.88] [fill={rgb, 255:red, 0; green, 0; blue, 0 }  ][line width=0.08]  [draw opacity=0] (11.61,-5.58) -- (0,0) -- (11.61,5.58) -- cycle    ;
		%Straight Lines [id:da9044524343875895] 
		\draw [line width=1.5]    (548.43,116.45) -- (588,116.95) ;
		\draw [shift={(592,117)}, rotate = 180.72] [fill={rgb, 255:red, 0; green, 0; blue, 0 }  ][line width=0.08]  [draw opacity=0] (11.61,-5.58) -- (0,0) -- (11.61,5.58) -- cycle    ;
		
		% Text Node
		\draw (63.53,113.61) node  [scale=1.5,font=\large]  {$-$};
		% Text Node
		\draw (25.84,100) node  [scale=1.5,font=\large]  {$r$};
		% Text Node
		\draw (567.62,97) node  [scale=1.5,font=\large]  {$y$};
		% Text Node
		\draw (165,114) node  [scale=1.5,font=\large]  {$\mathbf{C_{\mathbf{R}}}$};
		% Text Node
		\draw (303.75,115) node  [scale=1.5,font=\large]  {$\mathbf{C_{\mathbf{L}}}$};
		% Text Node
		\draw (396.34,116.04) node  [scale=1.5,font=\large]  {$\mathbf{G}$};
		% Text Node
		\draw (255,164) node  [scale=1.5,font=\large]  {$\mathcal{L}$};
		% Text Node
		\draw (382,32.04) node  [scale=1.4,font=\large]  {$\mathbf{\rho ^{\prime } =\dfrac{\rho }{C_{r}}}$};
		% Text Node
		\draw (532.53,117.61) node  [scale=1.5,font=\large]  {$+$};
		% Text Node
		\draw (478.09,114) node  [scale=1.5,font=\large]  {$\mathbf{\beta }$};
		\end{tikzpicture}}
	\caption{The block diagram representative of $H(s)$}
	\label{F-33}
\end{figure}
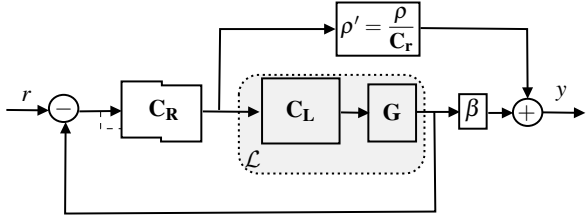
As a result, by Step 1 and Step 2, if Hypothesis (II) holds
\begin{equation}\label{E-51}
\exists \ (\beta\in\mathbb{R},\ \rho^\prime>0)\ |\ \forall\ \omega\in\mathbb{R}^+:\ \Re(H(j\omega))>0,
\end{equation}
and also, the claims in Remark \ref{Rs1} and Remark \ref{Rs2} are true. 
\newline Step 3: Since $\mathcal{L}(s)$ is strictly proper, it is possible to consider $\displaystyle\lim_{\omega\to \infty}|L|=\dfrac{|a_\infty+jb_\infty|}{|\omega|^n},\ n\geq2$. For GFORE, $|C_R|\approx\dfrac{\omega_r}{|\omega|}\text{ and } a_R\approx\dfrac{\omega_r^2}{\omega^2}$ for $\omega$ sufficiently large, hence, for $n=2$ and setting $\vv{\mathcal{N}^{\prime\prime}}=[a_\infty\quad\omega_r^2]^T$, yields
\begin{equation}\label{E-52}
\displaystyle\lim_{\omega\to \infty} \omega^2\Re(H(j\omega))=(\beta a_\infty+\rho^\prime\omega_r^2)=\vv{\xi}\cdot\vv{\mathcal{N}^{\prime\prime}}.
\end{equation}  
In addition
\begin{equation}\label{E-471}
\phase{\vv{\mathcal{N}^{\prime\prime}}}=\displaystyle\lim_{\omega\to\infty}\phase{\vv{\mathcal{N}}}\xRightarrow[]{\ (\ref{E-37})\ }\theta_{1}\leq\phase{\vv{\mathcal{N}^{\prime\prime}}}\leq\theta_{2}.
\end{equation}
Thus, by Step 1, $\displaystyle\lim_{\omega\to \infty} \omega^2\Re(H(j\omega))=\vv{\xi}\cdot\vv{\mathcal{N}^{\prime\prime}}>0$. For GFORE with $n>2$, $\displaystyle\lim_{\omega\to \infty}\omega^2\Re(H(j\omega))=\rho^\prime\omega_r^2>0$.  
For PCI, $\displaystyle\lim_{s\to \infty}H(s)=\rho>0$. Hence, by Hypothesis (II), $\displaystyle\lim_{s\to \infty}H(s)>0$ or $\displaystyle\lim_{\omega\to \infty}\omega^2\Re(H(j\omega))>0$.  
\newline Step 4: In order to show that the pairs $(A,C_0)$ and $(A,B_0)$ are observable and controllable, respectively, it is sufficient to show that the denominator and the numerator of $H(s)$ do not have any common root. Let $a_0+jb_0$ be a root of the denominator. Then
\begin{equation}\label{E-50-51-1}
\resizebox{\columnwidth}{!}{$
1+R_L(a_0,b_0)+jI_L(a_0,b_0)=0\Rightarrow
\begin{cases}
R_L(a_0,b_0)=-1,\\
I_L(a_0,b_0)=0\Rightarrow b_0=\mathcal{P}(a_0).
\end{cases}$}
\end{equation}  
If the numerator does not have a root at $a_0+jb_0$, then  
\begin{equation}\label{E-50-51-2}
\resizebox{\columnwidth}{!}{$
	\begin{array}{*{35}{c}}
	\beta\left(1+R_L(a_0,b_0)+jI_L(a_0,b_0)\right)+\rho^\prime\left(R_{C_R}(a_0,b_0)+I_{C_R}(a_0,b_0)\right) \neq 0\\ \xRightarrow{(\ref{E-50-51-1})}
	\beta\neq\rho^\prime R_{C_R}(a_0,b_0)\ \lor \ I_{C_R}(a_0,b_0)\neq 0.
\end{array}$}
\end{equation}
For GFORE, by (\ref{E-50-51-2}), this yields
\begin{equation}\label{E-50-51-3}
\beta\neq\dfrac{\rho^\prime\omega_r}{a_0+\omega_r}\ \lor \ b_0\neq 0,
\end{equation} 
and for PCI
\begin{equation}\label{E-50-51-4}
\beta\neq\dfrac{\rho^\prime(a_0+\omega_r)}{a_0}\ \lor \ b_0\neq 0.
\end{equation}
Therefore, by Step 1, (\ref{E-50-51-3}) and (\ref{E-50-51-4}), it is possible to find a pair $(\beta,\rho^\prime)$ such that $H(s)$ does not have any pole-zero cancellation.   
\newline Step 5: By Steps 1-4 and Hypothesis (I), we concluded that $H(s)$ is SPR, and $(A,C_0)$ and $(A,B_0)$ are observable and controllable, respectively. Hence, according to the $H_\beta$ condition \cite{beker2004fundamental,guo2015analysis,banos2011reset}, the system is quadratically stable. To complete the proof we have to show that the system is UBIBS stable. In \cite{beker2004fundamental}, it has been shown that, for GFORE and PCI , when $\gamma=0$ and the $H_\beta$ condition is satisfied, the system is UBIBS. Here, the part of that proof related to $\gamma$ is modified, while the remaining parts of the proof are the same. $e(t_i)=0$ if $t_i$ is a reset instants. Thus,
\begin{equation}\label{E-51}
\begin{array}{*{35}{c}}
\dfrac{x_r(t_i)}{dt_i}=A_r\left(e^{A_r(t_i-t_{i-1})}x_r(t_{i-1})+\int_{t_{i-1}}^{t_i}e^{A_r(t_i-\tau)}B_re(\tau)d\tau\right)=\\
A_rx_r(t_i)\Rightarrow\abs{\dfrac{x_r(t_i)}{dt_i}}=\abs{A_rx_r(t_i)}.
\end{array}
\end{equation}    
Thus, since $|x_r(t_i)|$ is bounded \cite{beker2004fundamental}, $\abs{\dfrac{x_r(t_i)}{dt_i}}$ is bounded. Because $|x_r(t_{i}^+)|\leq|x_r(t_{i})|$, and $|x_r(t_{i})|$ and $\abs{\dfrac{x_r(t_i)}{dt_i}}$ are bounded,
\begin{equation}\label{E-52}
\exists\ K_1>0,\ \alpha>0\ |\ \forall\ t_i:\ |x_r(t_{i})|<K_1\left(1-e^{\alpha(t_i-t_{i-1})}\right).
\end{equation}  
The rest of the proof remains the same, thus, we have proved that the reset system (\ref{E-23}) is UBIBS. 
\end{proof}
\begin{remark}\label{RR}
 Since this frequency-domain theorem is based on the $H_\beta$ condition, if one of the conditions (I) and (II) is not satisfied, then the system is not quadratically stable.
 \end{remark}	
%%%%%%%%%%%%%%%%%%%%%%%%%%%%%%%%%%%%%%%%%%%%%%%%%%%%%%%%%%%%%%%%%%%%%%%%%%%%%%%%Illustrative Example
\section{an Illustrative Example}\label{sec:4}
In this section an example is used to show how Theorem \ref{T1} can be used to study stability of reset control systems. For this purpose, the stability of a precision positioning system \cite{saikumar2019constant} controlled by a reset controller is considered. In this system (Fig. \ref{F-41}), three actuators are angularly spaced to actuate 3 masses (indicated by B1, B2, and B3) which are constrained by parallel flexures and connected to the central mass D through leaf flexures. Only one of the actuators (A1) is considered and used for controlling the position of mass B1 attached to the same actuator which results in a SISO system. This positioning stage with its amplifier is well modelled by the second order mass-spring-damper system \cite{saikumar2019constant} as following.
\begin{equation}\label{E-70}
G(s)=\dfrac{1.429\times10^8}{175.9s^2+7738s+1.361\times10^6}
\end{equation}
\begin{figure}[!t]
	\centering
	\includegraphics[width=0.5\columnwidth]{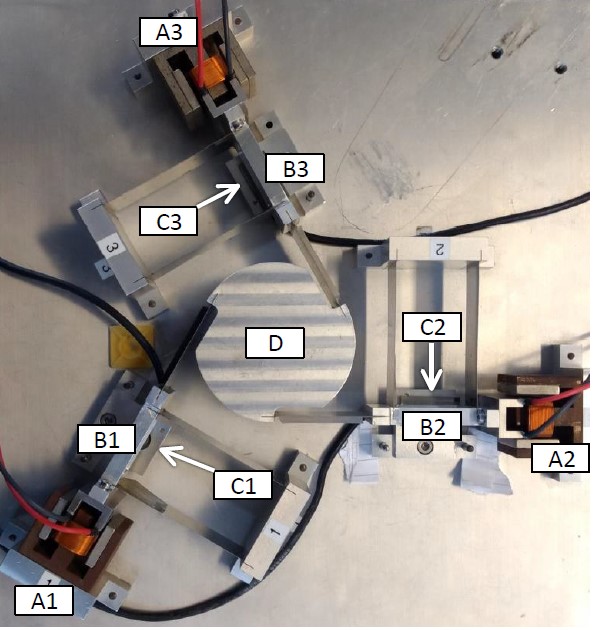}
	\caption{Spider stage}
\label{F-41}
\end{figure}	
In \cite{saikumar2019constant}, a non-linear phase compensator, which is termed Constant in gain Lead in phase (CgLp) (for more details see \cite{palanikumar2018no,chen2019development,saikumar2019constant}), has been used to improve the performance of the precision positioning stage. CgLp compensators, consisting of a lead filter and a GFORE, have been utilized along with a PID controller to give the overall controller    
\begin{equation}\label{E-71}
\resizebox{\columnwidth}{!}{$
C(s) = K_{p}\underbrace{\overbrace{\left(\cancelto{\gamma}{\dfrac{1}{\frac{ds}{\omega_c}+1}}\right)}^{\mathrm{GFORE}}\overbrace{\left(\dfrac{\frac{s}{\omega_c}+1}{\frac{s}{10\omega_c}+1}\right)}^{\mathrm{Lead}}}_{\mathrm{Reset\ Compensator}}\underbrace{\overbrace{\left(1+\frac{\omega_c}{10s}\right)}^{\mathrm{PI}}\overbrace{\left(\dfrac{\frac{gs}{\omega_c}+1}{\frac{s}{g\omega_c}+1}\right)}^{\mathrm{Lead}}\overbrace{\left(\dfrac{1}{\frac{s}{10\omega_c}+1}\right)}^{\mathrm{Low-Pass}}}_{\mathrm{PID}}.$}
\end{equation}
in which $\omega_c$ is the cross-over frequency and $K_{p},\ \gamma,\ d,\ \text{and }g$ are tuning parameters. In \cite{saikumar2019constant}, five controllers with different values of $K_{p},\ \gamma,\ d,\ \text{and }g$ (see Table \ref{Ta1}) have been designed to provide $45^\circ$ phase margin at $\omega_c=200\pi$(rad/s). This results in
\begin{table}
	\caption{Tuning parameters of controller (\ref{E-71}) \cite{saikumar2019constant}}
	\label{Ta1}
	\begin{center}
			\begin{tabular}{|c||c||c||c||c||c|}
				\hline
				\textbf{\begin{tabular}[c]{@{}c@{}}Tuning \\ Parameters\end{tabular}} & $C_1$ & $C_2$ & $C_3$ & $C_4$ & $C_5$ \\ \hline
				$K_{p}$                                                             & 0.070 & 0.163 & 0.201 & 0.197 & 0.183 \\ \hline
				$\gamma$                                                            & 0     & 0.2   & 0.4   & 0.6   & 0.8   \\ \hline
				$d$                                                                 & 1.44  & 1.23  & 1.11  & 1.04  & 1.01  \\ \hline
				$g$                                                                 & 1.98  & 2.12  & 2.27  & 2.43  & 2.63  \\ \hline
			\end{tabular}
	\end{center}
\end{table}
\begin{equation}\label{E-72}
\resizebox{\columnwidth}{!}{$
	\mathcal{L}_i(s)=\displaystyle\left(\dfrac{K_p(\dfrac{s}{200\pi}+1)(10s+200\pi)(\dfrac{gs}{200\pi}+1)1.429\times10^8}{s(\dfrac{s}{200\pi}+1)(\dfrac{s}{200\pi g}+1)(\dfrac{s}{2000\pi}+1)(175.9s^2+7738s+1.361\times10^6)}\right)
,$}
\end{equation}
\begin{equation}\label{E-73}
C_{R_i}(s)=\left(\dfrac{1}{\frac{ds}{200\pi}+1}\right),
\end{equation}
\begin{equation}\label{E-74}
L_i(s)=C_R\mathcal{L}.
\end{equation}
As the reset element used in these controllers is a GFORE and $\mathcal{L}_i$ have a pole at the origin, we use Definition \ref{D2} to assess stability. The properties of $\mathcal{N}_\chi(\omega)$ and $\mathcal{N}_\Upsilon(\omega)$ for these controllers are listed in Table \ref{Ta2}. On the basis of this table all of these reset control systems are of Type I.   
%%%%%%%%%%%%%Table
\begin{table*}
	\caption{Properties of $\vec{\mathcal{N}}(\omega)$ for the five considered reset control systems}
	\label{Ta2}
	\begin{center}
	\begin{tabular}{|c||c||c||c||c||c|}
		\hline
		Systems                                                                                & $L_1$                           & $L_2$                             & $L_3$                           & $L_4$                           & $L_5$                          \\ \hline
		\begin{tabular}[c]{@{}c@{}}$\mathcal{L}$ has a\\ pole at origin\end{tabular}             & Yes                             & Yes                               & Yes                             & Yes                             & Yes                            \\ \hline
		$\mathcal{M}$                                                                      & 279.2-6945.0                    & 495.7-7090.7                      & 630.0-7225.6                    & 686.8-7354.4                    & 718.3-7488.7                   \\ \hline
		$\mathcal{Q}$                                                                  & 80.9-256.3                      & 80.7-370.2                        & 81.2-398.9                      & 81.8-388.1                      & 82.6-368.0                     \\ \hline
		Sign($\mathcal{N}_\Upsilon(\omega\in\mathcal{M})$)                                           & +                               & +                                 & +                               & +                               & +                              \\ \hline
		Sign($\mathcal{N}_\chi(\omega\in\mathcal{Q})$)                                          & +                               & +                                 & +                               & +                               & +                              \\ \hline
		$\mathcal{I}_3$                                                                        & $\varnothing$                   & $\varnothing$                     & $\varnothing$                   & $\varnothing$                   & $\varnothing$                  \\ \hline
		$\delta_1<\psi_1$                                                                      & $0.11<0.44$                     & $0.12<0.45$                       & $0.14<0.47$                     & $0.18<0.61$                     & $0.34<1.42$                    \\ \hline
		Type              &   (I)   &  (I)   & (I)   &  (I)   &     (I)
		\\ \hline
	\end{tabular}	
	\end{center}
\end{table*}
To provide a better insight, the angels $\phase{\vv{\mathcal{N}}(\omega)}$ for these reset systems are plotted in Fig. \ref{F-42}. As demonstrated by the figure, for all of these systems $\theta_1\in(-\dfrac{\pi}{2},\pi)$, $\theta_2\in(-\dfrac{\pi}{2},\pi)$  and $\theta_2-\theta_{1}<\pi$, therefore, the condition in Remark \ref{Rs1} holds.
 %%Figureeee
 \begin{figure}[!t]
 	\centering
 	\resizebox{0.81\columnwidth}{!}{
 		\begin{tikzpicture}
 		\node[anchor=south west,inner sep=0] at (0,0) {\includegraphics[width=\columnwidth]{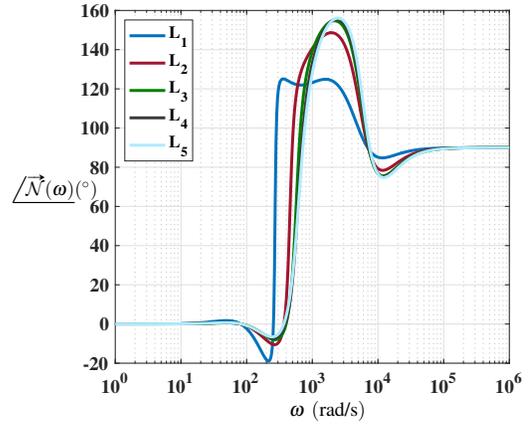}};
 		% Text Node
 		\draw (5,.2) node [scale=1.04]  {$\boldmath{\omega\ (\text{rad/s})}$};
 		\draw (0,4.3) node [scale=1.01]  {$\boldmath{\phase{\vv{\mathcal{N}}(\omega)}(^\circ)}$};
 		\end{tikzpicture}}
 		\caption{$\phase{\vec{\mathcal{N}}(\omega)}$ for the five considered reset control systems}
 		\label{F-42}
 	\end{figure}
Furthermore, the base linear systems of these controllers are stable and do not have any pole-zero cancellation in the open-loop transfer functions. Hence, by Theorem \ref{T1}, all of these controllers give UBIBS stable reset control systems.
  
In order to verify the results, the $H_\beta$ parameters for each reset system are listed in Table \ref{Ta3}. As demonstrated by the table, the $H_\beta$ condition is satisfied which is consistent with our conclusion. The step responses of the reset control systems are plotted in Fig. \ref{F-43} which demonstrates the performances of these controllers.
\begin{table}
	\caption{Pairs ($\rho^\prime,\beta$) for the five considered reset control systems}
	\label{Ta3}
	\begin{center}
		\begin{tabular}{|c||c|}
			\hline
			Systems & Equivalent $H_\beta$ ($\beta>0$)  \\ \hline
			$L_1$   & $2.24<\dfrac{\rho^\prime}{\beta}<8.77$   \\ \hline
			$L_2$   & $2.19<\dfrac{\rho^\prime}{\beta}<8.7.94$ \\ \hline
			$L_3$   & $2.12<\dfrac{\rho^\prime}{\beta}<6.85$   \\ \hline
			$L_4$   & $1.63<\dfrac{\rho^\prime}{\beta}<5.36$   \\ \hline
			$L_5$   & $0.7<\dfrac{\rho^\prime}{\beta}<2.91$    \\ \hline
	\end{tabular}	
\end{center}
\end{table}			

In summary, as shown by Table \ref{Ta2} and Fig. \ref{F-42}, the proposed results allow us determining stability of these reset control systems without computing values for the pair $(\rho,\beta)$. 
 %%Figureeee
\begin{figure}[!t]
	\centering
	\includegraphics[width=0.75\columnwidth]{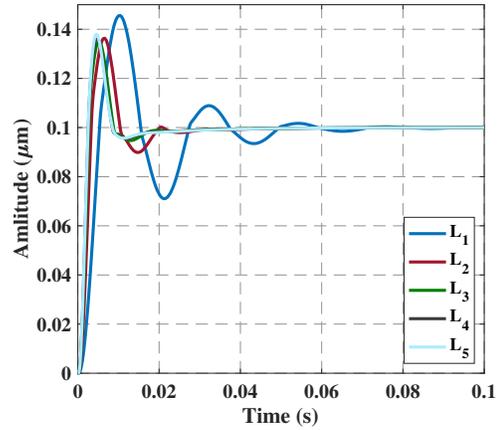}
	\caption{Step responses of the five considered reset control systems}
	\label{F-43}
\end{figure}
%\addtolength{\textheight}{-3cm}   % This command serves to balance the column lengths
                                  % on the last page of the document manually. It shortens
                                  % the textheight of the last page by a suitable amount.
                                  % This command does not take effect until the next page
                                  % so it should come on the page before the last. Make
                                  % sure that you do not shorten the textheight too much.
%%%%%%%%%%%%%%%%%%%%%%%%%%%%%%%%%%%%%%%%%%%%%%%%%%%%%%%%%%%%%%%%%%%%%%%%%%%%%%%%Conclusion
\section{Conclusion}\label{sec:5}
In this paper a novel frequency-domain method for determining stability properties of reset control systems has been proposed. This method is based on the $H_\beta$ condition and it can assess stability of reset control systems using the frequency response of their base linear open-loop transfer function. Consequently, this method does not need an accurate parametric model of the system and solving LMIs. The effectiveness of the proposed method has been illustrated by one practical example. This method may increase usage of reset controllers in high-precision industry to improve performances of control systems.

%This is a repeat.
%Position figures and tables at the tops and bottoms of columns.
%Avoid placing them in the middle of columns. Large figures and tables
%may span across both columns. Figure captions should be below the figures;
% table captions should be above the tables. Avoid placing figures and tables
%  before their first mention in the text. Use the abbreviation ``Fig. 1'',
%  even at the beginning of a sentence.
%Figure axis labels are often a source of confusion.
%Try to use words rather then symbols. As an example write the quantity ``Inductance",
% or ``Inductance L'', not just.
% Put units in parentheses. Do not label axes only with units.
% In the example, write ``Inductance (mH)'', or ``Inductance L (mH)'', not just ``mH''.
% Do not label axes with the ratio of quantities and units.
% For example, write ``Temperature (K)'', not ``Temperature/K''.

%%%%%%%%%%%%%%%%%%%%%%%%%%%%%%%%%%%%%%%%%%%%%%%%%%%%%%%%%%%%%%%%%%%%%%%%%%%%%%%%ACKNOWLEDGMENTS
\section{ACKNOWLEDGMENTS}
This work has been partially supported by NWO through OTP TTW project $\#$16335, by the Erasmus institution, by the European Union's Horizon 2020 Research and Innovation Programme under grant agreement No 739551 (KIOS CoE), and by the Italian Ministry for Research in the framework of the 2017 Program for Research Projects of National Interest (PRIN), Grant no. 2017YKXYXJ.  
%%%%Reference
\bibliographystyle{IEEEtran}
\bibliography{phd}
\end{document}